\newcommand{\be}{\begin{equation}}
\newcommand{\ee}{\end{equation}}
\newcommand{\benum}{\begin{enumerate}}
\newcommand{\eenum}{\end{enumerate}}
\newcommand{\bit}{\begin{itemize}}
\newcommand{\eit}{\end{itemize}}
\newtheorem{thm}{Theorem}
\newtheorem{rem}{Remark}
\newtheorem{defn}{Definition}
\begin{document}
\def\s{\subseteq}
\def\n{\noindent}
\def\se{\setminus}
\def\dia{\diamondsuit}
\def\la{\langle}
\def\ra{\rangle}

%--------------------------------------------------------------

\title{Structure Properties of Koch Networks
Based on Networks Dynamical Systems
}

\author{Yinhu Zhai$^a$, Jia-Bao Liu$^b$\footnote{  Corresponding author: Jia-Bao Liu, Shaohui Wang. E-mail addresses: Y. Zhai(zhaiyh@gdut.edu.cn),   J.B.  Liu (liujiabaoad@163.com),  S. Wang  ( shaohuiwang@yahoo.com) },  Shaohui Wang$^{c*}$\\
\small\emph {a. School of Information Engineering, Guangdong University of Technology, Guangzhou 510006, China}\\
\small\emph {b. Department of Mathematics, Anhui Jianzhu University, Hefei 23060, China}\\
\small\emph {c. Department of Mathematics and Computer Science, Adelphi University, Garden City, NY 11530, USA}
}
\date{}
\maketitle

\begin{abstract}
We introduce an informative labeling algorithm for the vertices of a family of Koch networks. Each of the labels is consisted of two parts, the precise position and the time adding to Koch networks. The shortest path routing between any two vertices is determined only on the basis of their labels, and the routing is calculated only by few computations. The rigorous solutions of betweenness centrality for every node and edge are also derived by the help of their labels. Furthermore, the community structure in Koch networks is studied by the current and voltage characteristics of its resistor networks.

\vskip 2mm \noindent {\bf Keywords:}  Complex networks; Koch networks; Shortest path routing; Betweenness centrality; Resistor networks. \\
%{\bf AMS subject classification:} 05C12,  05C05
\end{abstract}

\section{Introduction}
The WS small-world models \cite{1} and BA scale-free networks \cite{2} are two famous random networks which caused in-depth understanding of various physical mechanisms in empirical complex networks. The two main shortcomings are the uncertain creating mechanism and huge computation in analysis. Deterministic models always have important properties similar to random models, such as scale-free and small-world and high clustered, thus it could be used to imitating empirical networks appropriately. Hence the study of the deterministic models of complex network has increasing recently.

Inspired by simple recursive operation and techniques of plane filling and generating processes of fractal, several deterministic models \cite{3}-\cite{15}  have been created imaginatively and studied carefully. The famous Koch fractals \cite{16},  its lines are mapped into vertices, and there is an edge between two vertices if two lines are connected, then the generated novel networks was named Koch networks \cite{17}.  This novel class of networks incorporates some key properties which are characterized the majority of real-life networked systems: a power-law distribution with exponent in the range between 2 and 3, a high clustering coefficient, a small diameter and average path length and degree correlations. Besides, the exact numbers of spanning trees, spanning forests and connected spanning subgraphs in the networks is enumerated by Zhang et al in \cite{17}. All these features are obtained exactly according to the proposed generation algorithm of the networks considered \cite{001}-\cite{002}, \cite{20}-\cite{27}. 

However, some important properties in Koch networks, such as vertex labeling, the shortest path routing algorithm and length of shortest path between arbitrary two vertices, the betweenness centrality, and the current and voltage properties of Koch resistor networks have not yet been researched. In this paper, we introduced an informative labeling and routing algorithm for Koch networks. By the intrinsic advantages of the labels, we calculated the shortest path distances between arbitrary two vertices in a couple of computations. We derived the rigorous solution of betweenness centrality of every node and edge, and we also researched the current and voltage characteristics of Koch resistor networks.

\section{Koch networks}

The Koch networks are constructed in an iterative way. Let  $K_{m,t}$ denotes the Koch networks after $t \in  N $   iterations, and in which $N^*$  is a structural parameter.

\begin{defn} The Koch networks  $K_{m, t}$ are generated as follows: Initially $(t = 0)$,  $K_{m, 0}$ is a triangle. For  $t \geq 1$, $K_{m, t}$  is obtained from $K_{m, t-1}$  by adding $m$  groups of vertices to each of the three vertices of every existing triangles in  $K_{m,t-1}$. 
\end{defn}

\begin{rem}  Each group is consisted of two new vertices, be called son vertices. For both of the sons and their father vertex are connected to one another, the three vertices shaped a new triangle. 
\end{rem}

That is to say, we can get  $K_{m,t}$ from $K_{m,t-1}$  just by replacing each existing triangle in $K_{m,t-1}$ with the connected clusters on the right-hand side of Figure 1.

\begin{figure}[htbp]
   \centering
   \includegraphics[width=5in]{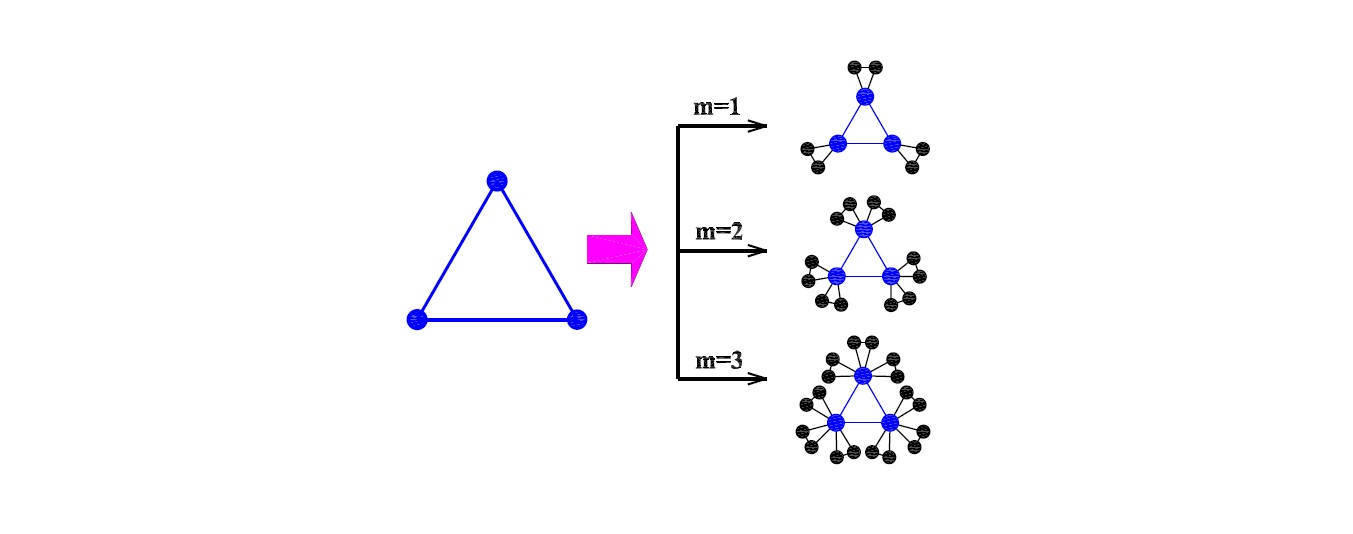}
   \caption{ Iterative construction method for the Koch networks when  $m = 1, 2, 3$. \cite{p1}}
 \label{fig: te}
\end{figure}

Some important properties of Koch networks are derived as below. The numbers of vertices and edges, i.e. order and size, in networks $K_{m,t}$ are 

\begin{equation} \label{eq:1}
N_t = 2(3m+1)^t +1,
\end{equation}
and 
\begin{equation}
E_t = 3(3m+1)^t.
\end{equation}

By denoting  $\Delta_v(t)$ as the numbers of nodes created at step  $t$, we obtained  $\Delta_v(t) = 6m(3m+1)^{t-1}$, then we also got that the degree distribution is  $P(k=2(m+1)^{t-i}) = 6m(3m+1)^{i-1} / [2(3m+1)^t +1]$, by substituting  
$i= t- ln(\frac{k}{2})/ln（m+1）$ in it, in the infinite $t$  limit, it gives

\begin{equation}
P(k) = \frac{3m}{3m+1} 2^{\frac{ln(3m+1)}{ln(m+1)}}k^{-\frac{ln(3m+1)}{ln(m+1)}}. 
\end{equation}

Then the exponent of degree distribution is $\gamma = ln(3m+1)/ln(m+1)$, which is belong to the interval $(1,2]$. The average clustering coefficient $C$ of the whole network is given by  $C = \frac{1}{N_t} \sum_{r=0}^{t} \frac{1}{k_i(r) - 1}L_u(r)$. When  $m$ is increased from $1$ to infinite, $C$  is increased from $0.82008$ to $1$. So, the Koch networks are highly clustered. The average path length (APL) approximates  $4mt/(3m+1)$ in the infinite  $t$, for APL is

\begin{equation}
d_t = \frac{3m+5+(24mt+24m+4)(3m+1^t)      }{3(3m+1)[2(3m+1)^t +1]     }: \frac{4mt}{3m+1}.
\end{equation}

It shows that Koch networks exhibit small-world behavior.
These properties indicated that Koch networks incorporate some key properties characterizing a majority of empirical networks: of simultaneously scale-free, small-world, and highly clustered.\cite{17}

\section{Vertex labeling algorithm}

\begin{defn}
All the vertices are located in three different sub-networks of Koch network, the label $n (n= 1, 2 $ or $ 3)$ is used to denote the sub-networks.
\end{defn}

\begin{rem}
Denote the three symmetrical sub-networks in Koch networks $K_{m,t}$ as  $K_{m,t}^1$, $K_{m,t}^2$  and  $K_{m,t}^3$, then $K_{m,t}$ is obtained just by linking the hub of three sub-networks directly. Therefore, the label $n (n = 1,2 or 3)$ is used to distinct the vertices in the three different sub-networks $K_{m,t}^n$.
\end{rem}

A binary digits code is used to identify the precise position of a vertex in $K_{m,t}^n$  and the exact time which is linked to  $K_{m,t}^n$, the method is shown as below.

\begin{defn}
Any vertex in  $K_{m,t}^n$ is marked with binary digits $b_1b_2b_3 \dots b_j$, where $b_1= 0$ when $j=1.$  $b_j = 0 ~ or~ 1$ when $j= 2,3,,\dots, t$. The $0~(or ~1)$ in binary digits represented that the new $m$  vertices are grown from a son vertex (or father vertex) in a triangles. The length of the binary digits is the time of the vertex which is linked into Koch networks.
\end{defn}

\begin{rem}
Because the initial network $K_{m,0}$  is a triangle, all the three initial vertices in it have no father vertices, so that the new vertices adding to the initial vertices should marked with $0$ at time  $j=1$, that is, $b_1$  must be $0$. 
\end{rem}

Then, we obtained the set  $S$, possessing all the binary digits codes of each vertices in  $K_{m,t}$, as below

\begin{equation}
S = \{\Phi, b_1, b_1b_2, b_1b_2b_3, \dots, b_1b_2b_3\dots b_t \}.
\end{equation}

\begin{rem}
The element $\Phi$ in $S$ implies that, when  $t=0$, the length of $\Phi$  is zero in  $K_{m,0}$. 
\end{rem}

The Definition 3  ensures that all the vertices, adding to an existing vertex at step  $j$, have the same binary codes  $b_1b_2b_3 \dots b_j$. Consequently, the number of vertices which are added to an existing father vertex at step $j$  is given by

\begin{equation}
l_{max} (j) = (2m) ^{j- \sum_{i=1}^{j} b_i} (m+1)^{\sum_{i=1}^{j} b_i}.
\end{equation}

So that we need to mark the vertices of this group with an extra integer  $l(j) \in [1, l_{max}(j)]$ for they all have the same binary codes $b_1b_2b_3 \dots b_j$ and the same group indicator  $n$.

\begin{defn}
An integer $l(j)$ is used to identify the precise position, increasing by clockwise direction, of a vertex in the group which are added to a father vertex at the iteration  $j$.
\end{defn}

\begin{rem}
Because  $l(j)$ is increased from $1$ and is positioned after the binary codes, a dot is needed to insert into the integer $l(j)$  and the binary codes for avoiding confusion.
\end{rem}

In sum, arbitrary vertex which is added to $K_{m,t}$ at step  $j$ will label with  $nb_1b_2b_3 \dots b_j l(j)$. The code $n$ denotes which sub-networks of $K_{m,t}^n$  is the vertex belonging to; the binary digits  $b_1b_2b_3 \dots b_j$ indicates which father vertex it is linking to; the positive integer  , which is increasing  by clockwise, is used in marking the precise position around a father vertex.

Define the set $M(j)$  as the label set of the vertices which are adding to networks  $K_{m,t}$ at step  $j$, it is apparently that   $M(0) = \{1, 2, 3\}$ and  $M(j) = \{nb_1b_2b_3 \dots b_j l(j)\}$.  Let the set $L_{m,t}$  represents all the label of all vertices in  $K_{m,t}$, we obtained

\begin{equation}
L_{m,t} = \bigcup_{j=1}^t M(j).
\end{equation}

For example, Figure 2 demonstrates the vertex labelling of all the vertices in Koch network  $K_{2,2}$. In the following sections, we deduced some important properties of Koch networks just on the basis of the labels of their vertices.

\begin{figure}[htbp]
   \centering
   \includegraphics[width=4in]{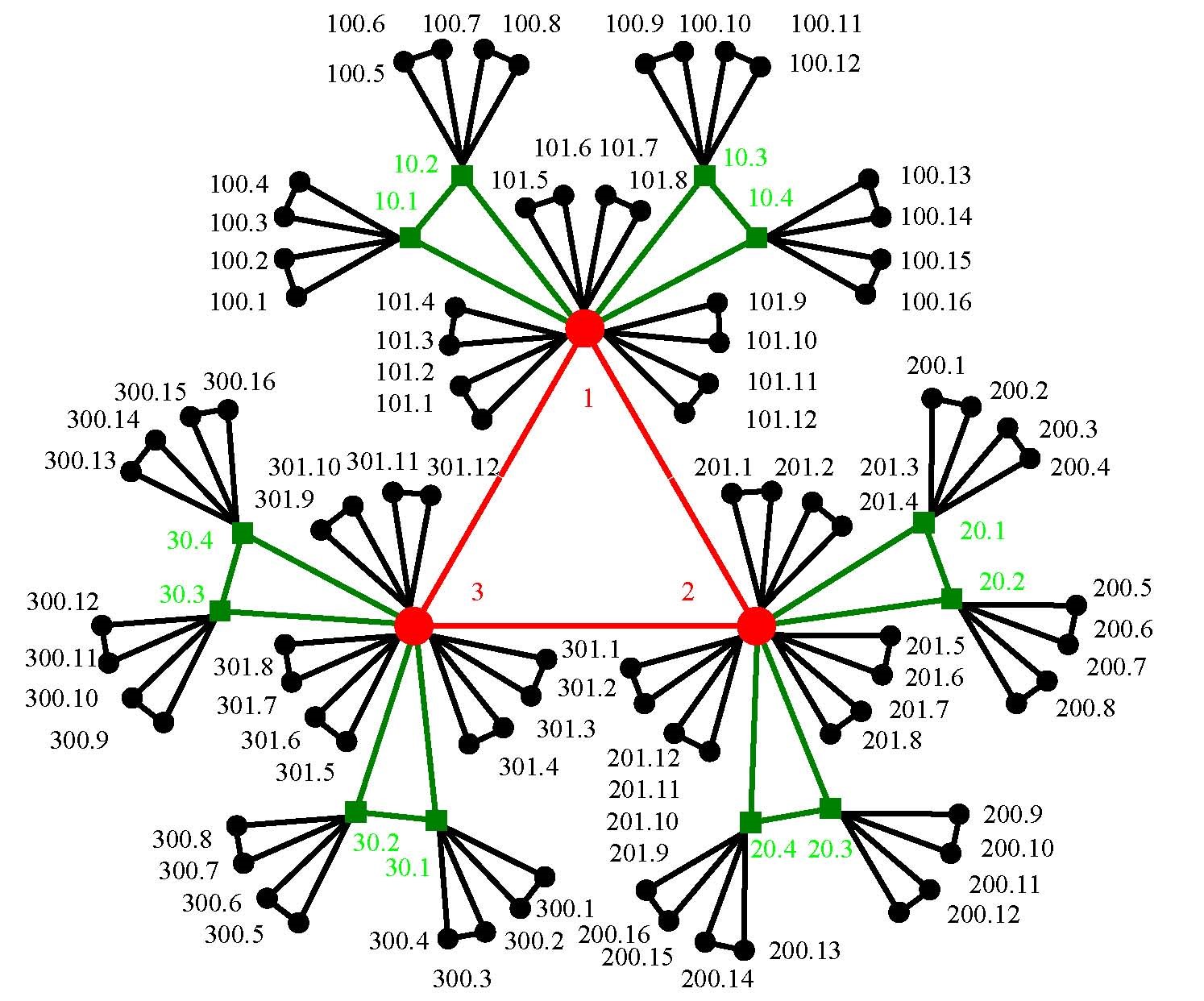}
   \caption{The labeling of Koch network $K_{2,2}$ when $m=2$ and  $t=0, 1, 2$. The red dots are initial vertices; the green squares are vertices adding at  $t=1$; the black dots denote vertices adding to network at  $t=2$.}
 \label{fig: p2}
\end{figure}

\begin{thm}
Each vertex has a unique label.
\end{thm}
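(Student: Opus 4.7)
The plan is to prove the theorem by induction on the construction time $t$. The base case $t=0$ is immediate: the three initial vertices carry labels $1,2,3$ and are pairwise distinct. For the inductive step I assume every vertex in $K_{m,t-1}$ has received a unique label, and I have to show that no collision is introduced when the $6m(3m+1)^{t-1}$ new vertices are adjoined at step $t$.

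The first observation is that new labels cannot collide with old ones: by Definition 3 the length of the binary string in a label equals the creation step of the vertex, so every label produced at step $t$ has binary-code length exactly $t$ while every pre-existing label has length strictly less than $t$. Combined with the induction hypothesis, the problem reduces to showing that the labels of two distinct vertices $v,v'$ both added at step $t$ must differ. I split this into three sub-cases. If $v$ and $v'$ lie in different sub-networks, they are separated by their leading digit $n$. If they lie in the same sub-network and are attached to the same father at step $t$, then they belong to the same group and hence share the binary code $b_1b_2\cdots b_t$; Definition 6 then assigns them distinct clockwise positions $l(t)\in[1,l_{\max}(t)]$, and the integer tails differ. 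The remaining and nontrivial case is when $v,v'$ are in the same sub-network but attached to different fathers (equivalently, belong to different groups).

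For this last case I would prove the auxiliary claim: within each sub-network, the map sending a group created at step $j$ to its length-$j$ binary code is injective. This claim is itself proved by induction on $j$. Each group at step $j$ is contained in a unique triangle $T$ existing at step $j-1$, and is attached to one of the three vertices of $T$, which plays either the role ``father'' (the vertex of $T$ that existed before $T$ was formed) or ``son'' (one of the two vertices born together with $T$). By the labelling rule, the group's binary code is the code of $T$'s creation history prolonged by the bit $b_j\in\{0,1\}$ which records this father/son role. The inductive hypothesis gives distinct codes for distinct triangles, and within a fixed triangle the three vertices of $T$ yield three different (inherited code, role-bit) pairs because the unique father of $T$ has a shorter creation history than the two sons and the two sons are themselves distinguished inductively. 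As a sanity check, the counting identity
\begin{equation}
\sum_{b_1=0,\,b_2,\ldots,b_t\in\{0,1\}} l_{\max}(t)\;=\;(2m)(3m+1)^{t-1}
\end{equation}
(evaluated by the binomial theorem) equals exactly the number of vertices added to one sub-network at step $t$, so once injectivity is verified the label map is automatically a bijection onto the new vertices.

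The step I expect to be the main obstacle is formalising the ``role of a vertex in its triangle'' used in the auxiliary claim: one must pin down, triangle by triangle, which vertex is considered the father and which are the sons, and verify that this assignment is unambiguous and consistent with the bit $b_j$. Once that bookkeeping is set up, the injective extension from step $j-1$ to step $j$ is a short check, and the theorem follows by combining the three sub-cases above with the induction hypothesis.
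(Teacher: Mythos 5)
Your overall frame (induction on $t$; new labels cannot collide with old ones because the code length equals the birth step; the leading digit $n$ separates the three sub-networks) is fine, and your counting identity $\sum l_{\max}(t)=2m(3m+1)^{t-1}$ is correct. But the auxiliary claim your hard case rests on --- that within a sub-network the map from groups created at step $j$ to their length-$j$ binary codes is injective --- is false, so the third sub-case does not go through. The binary code records only the pattern of son/father roles along the ancestral lineage, not the identity of the father: all vertices attached to a given father at a given step share one code, and distinct fathers with equal codes produce children with equal codes. Already in $K_{1,2}$ the two step-$1$ sons labeled $10.1$ and $10.2$ each receive a group at step $2$, and all four of those vertices carry the same code $00$; in general a length-$j$ code class contains $l_{\max}(j)=(2m)^{j-\sum b_i}(m+1)^{\sum b_i}$ vertices, i.e.\ $l_{\max}(j)/2$ distinct groups, which exceeds one group for every $m\ge 1$ once $j\ge 2$ (and already at $j=1$ for $m\ge 2$). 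Your own sanity check exposes the problem: if each code class were a single group, step $t$ would contribute at most $2\cdot 2^{t-1}$ vertices per sub-network, not $2m(3m+1)^{t-1}$, so the identity you compute is incompatible with the injectivity you assume. Your second sub-case contains the same misconception in milder form: ``attached to the same father at step $t$'' does not imply ``same group,'' since a father gains $m$ groups from each triangle containing it; the conclusion survives there only because same-father children do share a code.

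The repair is to drop injectivity of the code map and let the integer part do the disambiguation globally: as the formula for $l_{\max}(j)$ and the ceiling formula of Theorem 4 (father $=\lceil l_v(i)/2m(m+1)^{\sum_{k=j+1}^{i}b_k}\rceil$) make clear, $l(j)$ is a single clockwise enumeration of \emph{all} vertices of the sub-network sharing a given code --- across groups and across fathers --- so same-code vertices receive distinct integers by construction of that enumeration, and your counting identity then upgrades injectivity of the full label map to a bijection onto the step-$t$ vertices. This is in effect what the paper does, though far more tersely: it simply asserts pairwise distinctness of labels from the labeling algorithm and notes that $|L_{m,t}|$ equals the order of the network, with no case analysis at all. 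So your write-up is more explicit than the paper's, but as it stands its key lemma is wrong rather than merely unproven, and the same-code, different-father case is exactly the one it fails to cover.
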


\begin{proof}
Suppose an arbitrary vertex labels with $nb_1b_2b_3 ... b_j l(j)$. Firstly, from the labeling algorithm, the labels of any pair vertices are different from each other. Secondly, the size of $L_{m,t}$  equals the size of Koch networks. So, we deduced that any vertex has a unique label.
    
Assume that  $nb_1b_2b_3 ... b_i l_v(i)$ is the label of arbitrary vertex $v$  which is adding to $K_{m,t}$  at step  $i$, and let the set $A(v)$  denotes the labels of all neighbor vertices of  $v$. By comparing the vertex's degree between   $v$ and its neighbors, $A(v)$  can be divided into three subsets:  $A_e(v)$, $A_l(v)$   and  $A_h(v)$, the vertices in which sets have degree equals, lower and higher than the degree of  $v$, respectively. That is to say,  $A(v) = A_e(v) \cup A_l(v) \cup A_h(v)$. 

\end{proof}

\begin{thm}
$A_e(v) = \{nb_1b_2b_3 \dots b_i l_e(i)\}$, where vertex degree $l_e(i) = l_v(i) +1$ if $mod(l_v(i), 2) = 1$, or $l_e(i) = l_v(i) -1$ if $mod(l_v(i), 2) = 0$.
\end{thm}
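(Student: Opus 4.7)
The plan is to show that the only neighbor of $v$ sharing its degree is the other son vertex in the same two-son group that was attached to the same father at step $i$, and then to read off the parity relationship between $l_v(i)$ and $l_e(i)$ from the clockwise labeling convention of Definition 4.

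First I would argue that any neighbor of $v$ with the same degree must have been added to $K_{m,t}$ at the same step as $v$. Indeed, a vertex introduced at step $j$ has degree $2(m+1)^{t-j}$ in $K_{m,t}$ (each step multiplies the degree by $m+1$, since every incident triangle spawns $m$ new groups of two sons that share an edge with it), so two adjacent vertices have equal degree exactly when they enter the network at the same iteration. Combined with Definition 3, such a neighbor must therefore share with $v$ the full prefix $nb_1b_2\cdots b_i$: the group label $n$ is forced because $v$ and its neighbor lie in a common triangle (hence a common sub-network $K_{m,t}^n$), and the binary string $b_1b_2\cdots b_i$ is forced because both vertices are sons of the same father vertex at step $i$. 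Thus every element of $A_e(v)$ has label of the form $nb_1b_2\cdots b_i\,l_e(i)$ for some integer $l_e(i)$.

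Next I would pin down which integer $l_e(i)$ occurs, and show it is unique. By Remark 1, when new vertices are attached to a given father at step $i$ they come in $m$ groups, each group containing exactly two son vertices which together with the father form a new triangle; the only edges among step-$i$ children of the same father are the within-group edges. Hence $v$ has exactly one same-step neighbor, namely its \emph{brother} in its own two-son group. According to Definition 4, the index $l(i)$ is assigned clockwise around the father, so the two brothers in the first group receive the labels $1$ and $2$, those in the second group receive $3$ and $4$, and in general the $k$-th group receives $2k-1$ and $2k$. Consequently, if $l_v(i)$ is odd then its brother carries $l_v(i)+1$, while if $l_v(i)$ is even then its brother carries $l_v(i)-1$, which is exactly the rule claimed for $l_e(i)$.

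Finally I would observe that $l_e(i)$ lies in the admissible range $[1,l_{\max}(i)]$ given by formula~(6), so the candidate label actually corresponds to an existing vertex, and therefore $A_e(v)=\{nb_1b_2\cdots b_i\,l_e(i)\}$ as stated.

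The main obstacle I expect is the first step: cleanly justifying that equality of degree among neighbors forces equality of insertion step, and in turn that both the sub-network label $n$ and the entire binary prefix must coincide. Once this structural constraint is established, the parity rule is just a matter of reading the clockwise indexing convention, so the remainder of the argument is essentially bookkeeping.
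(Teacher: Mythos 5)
Your proposal is correct and follows essentially the same route as the paper's proof: identify the brother vertex in the same two-son group as the unique equal-degree neighbor, then read off the odd/even pairing $\{2k-1,2k\}$ from the clockwise assignment of $l(i)$. In fact your version is slightly more complete than the paper's, since you explicitly close the converse direction (using the degree formula $2(m+1)^{t-j}$ to show that an equal-degree neighbor must enter at the same step and hence must be the brother), a point the paper's proof asserts only implicitly.
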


\begin{proof}
From the construction algorithm of  $K_{m,t}$, any father vertex will add $m$  group vertices at each step, and every group vertices is consisted of two vertices, then three of them is linked to each other and formed a new triangle. Therefore, the two vertices in the same group are neighbors which are linking directly and have the same degrees. By the labeling method, the $m$  group vertices labels with the integers $l(i)$  which increasing from $1$ to $l_{max}(i)$  by clockwise. So that,  $nb_1b_2b_3 \dots b_i (l_v(i) +1)$  is the neighbor of   $nb_1b_2b_3 \dots b_i l_v(i) $  with the same degrees if $mod (l_v(i), 2) = 1$, or $nb_1b_2b_3 \dots b_i (l_v(i) -1)$  is the neighbor if $mod (l_v(i), 2) = 0$. 
\end{proof}

\begin{thm}
$A_l(v) = \{nb_1b_2b_3 \dots b_i 0 l_v(i+1), nb_1b_2b_3 \dots b_i 0 l_v(i+2), \dots, nb_1b_2b_3 \dots b_i 0 l_v(t)\}.$
\end{thm}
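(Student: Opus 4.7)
The plan is to first characterize which neighbors of $v$ have strictly smaller degree than $v$, and then to read off the labels of those neighbors from the algorithm of Section 3. Since the degree of a vertex depends only on its birth step, a vertex born at step $j$ has degree $2(m+1)^{t-j}$ at time $t$, so a neighbor of $v$ belongs to $A_l(v)$ if and only if it was introduced into the network strictly after $v$, that is, at some step $j$ with $i < j \le t$.

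Next I would use the Koch construction to identify which later-born vertices are actually adjacent to $v$. At step $j$ the only edges added are those joining each freshly spawned son to the existing vertex from which its enclosing group is grown (and to its sibling in the same group). Hence a step-$j$ vertex is adjacent to $v$ precisely when it lies in one of the $m$ groups spawned directly from $v$ at step $j$, so $v$ is the progenitor recorded in its label.

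Applying Definition 3 and Definition 4, I would then read off the labels of these sons. The first $i$ binary digits of such a son agree with $v$'s digits $b_1 b_2 \ldots b_i$, while the suffix at positions $i+1, \ldots, j$ encodes the son-versus-father role of $v$'s spawning lineage. Since $v$ itself was inserted as a son, the branch rooted directly at $v$ contributes only $0$-bits, yielding a label of the form $n b_1 b_2 \ldots b_i 0 \, l_v(j)$ with $l_v(j) \in [1, l_{\max}(j)]$. Letting $j$ range over $\{i+1, \ldots, t\}$ and $l_v(j)$ over its admissible values then reproduces the set claimed on the right-hand side of the theorem.

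The main obstacle is a careful label-bookkeeping step: one has to verify, using only the labeling convention of Definition 3, that the suffix appended along the branch spawned directly from $v$ is indeed an all-zero string (so that children of $v$ are distinguished from later descendants of $v$, which are not neighbors of $v$), and that the range of $l_v(j)$ exhausts precisely the sons adjoined to $v$ at step $j$. Once these two checks are carried out, the claimed equality for $A_l(v)$ follows by direct enumeration over $j$ and $l_v(j)$.
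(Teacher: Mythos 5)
Your overall strategy coincides with the paper's (very terse) proof of this theorem: since a vertex born at step $j$ has degree $2(m+1)^{t-j}$, the lower-degree neighbors of $v$ are exactly its later-born neighbors; since the only edges created at step $j$ join each new son to its spawning vertex and its sibling, the later-born neighbors of $v$ are precisely the sons spawned directly from $v$ at steps $i+1,\dots,t$; it then remains to read off their labels. Your first two steps are correct and in fact more explicit than what the paper writes (the paper only remarks that longer codes mean lower degrees and that the appended bit records son-versus-father growth).

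The gap sits in the label-reading step, exactly the bookkeeping you yourself flag as the main obstacle. First, the suffix is not an all-zero string. Under Definition 3 a $0$-bit records growth from a \emph{son} vertex of a triangle, and $v$ plays the son role only in its birth triangle; in every triangle formed at $v$ after step $i$, $v$ is the father. Consistency with Theorem 4 forces the suffix: the father of a vertex is recovered from the \emph{rightmost} $0$ in its binary code, so the only length-$j$ code whose rightmost-zero prefix equals $b_1b_2\cdots b_i$ is $b_1\cdots b_i\,0\,1\cdots 1$ with $j-i-1$ trailing ones. With your all-zero suffix of length $j-i>1$, the rightmost zero would sit at position $j$, and Theorem 4 would return a father with code $b_1\cdots b_i 0\cdots 0$ of length $j-1$, i.e.\ a descendant of $v$ rather than $v$ itself; so for every $j>i+1$ your labels would denote vertices not adjacent to $v$. (The theorem's own statement also writes a single appended $0$, silently suppressing the trailing ones, so your displayed set happens to match the paper's wording --- but your justification, ``the branch rooted directly at $v$ contributes only $0$-bits,'' fails when checked against Definition 3 and Theorem 4.) Second, your range claim $l_v(j)\in[1,l_{\max}(j)]$ overcounts: by equation (6), $l_{\max}(j)$ counts \emph{all} vertices carrying that binary code, and these are distributed among all $l_{\max}(i)$ fathers sharing the code $b_1\cdots b_i$; only the block of indices satisfying $\lceil l/(2m(m+1)^{j-i-1})\rceil = l_v(i)$, of size $2m(m+1)^{j-i-1}$, labels sons of $v$. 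Whenever $l_{\max}(i)>1$, letting $l_v(j)$ exhaust $[1,l_{\max}(j)]$ would therefore include vertices that are children of other same-code vertices and are not neighbors of $v$.
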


\begin{proof}
From the labeling algorithm, the vertices with longer binary codes have lower degrees than the vertices with shorter binary codes. In addition, the $0$ or $1$ in binary codes indicates the new vertex is growing from the two son vertices or father vertex in each triangle. Hence we can understand that the vertices, adding to $v$  at steps  $i+1$,  $i+2$, ... ,  t, is labeled with $nb_1b_2b_3 \dots b_i 0 l_v(i+1), nb_1b_2b_3 \dots b_i 0 l_v(i+2), \dots, nb_1b_2b_3 \dots b_i 0 l_v(t)$. 
\end{proof}

Define $\lceil{x} \rceil = ceil(x)$ as   the function returning the biggest integer just smaller than real number $x$.

\begin{thm}
$A_h(v) = \{nb_1b_2b_3 \dots b_{j-1}. \lceil {l_v(i) / 2m(m+1)^{\sum_{k=j+1}^{i} b_k}} \rceil\}$.

\end{thm}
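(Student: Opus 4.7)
The plan is to identify the set of higher-degree neighbors of $v$ and then read off each element's label directly from $v$'s label $nb_1b_2\ldots b_i l_v(i)$. By Definition 1 and Remark 1, when $v$ is introduced at step $i$ it receives edges only to its own father vertex and to its brother in the same group, while all later iterations attach only lower-degree children (these are precisely the vertices collected in $A_l(v)$ by Theorem 3). The brother has the same degree as $v$ by Theorem 2, so every element of $A_h(v)$ must coincide with an ancestor of $v$ in the generation tree induced by the labeling.

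First I would recover the binary prefix of such an ancestor. A father vertex that was added at step $j-1$ has a code of length $j-1$, and the construction forces $v$'s own code to extend this prefix for every descendant; that the prefix is exactly $b_1b_2\ldots b_{j-1}$ then follows directly from Definition 3, which prescribes how $b_j$ is appended when a group is spawned.

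Second I would invert the clockwise $l$-numbering. The crucial combinatorial observation is that, when the father $p$ at step $j-1$ carries integer coordinate $l_p$, the descendants of $p$ at step $i$ that share the prefix $b_1\ldots b_{j-1}$ occupy a contiguous block of $l$-values at step $i$. Tracking the block size across iterations using the closed form of $l_{\max}(k)$, one checks that the block grows by a factor $2m$ or $m+1$ per step, according as $b_k=0$ or $b_k=1$; aggregating from step $j$ to step $i$ while isolating the factor $2m$ that corresponds to the son/brother pair born with $v$ yields a block of size $2m(m+1)^{\sum_{k=j+1}^{i}b_k}$. Since the clockwise $l$-values in this block run consecutively beginning at $(l_p-1)\cdot 2m(m+1)^{\sum_{k=j+1}^{i}b_k}+1$, applying $\lceil\cdot\rceil$ to the ratio $l_v(i)/[2m(m+1)^{\sum_{k=j+1}^{i}b_k}]$ retrieves $l_p$, which is exactly the expression claimed in the statement.

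Finally I would verify the formula on the small example $K_{2,2}$ in Figure 2, where each father/son relationship can be read off directly, and then organize the main argument as an induction on the gap $i-(j-1)$: the base case $j=i$ reduces to division by $2m$ (the size of $v$'s own group), and the inductive step multiplies the block size by $2m$ or $m+1$ according to the bit $b_k$. The main obstacle will be purely bookkeeping, namely proving rigorously that the clockwise ordering produces contiguous descendant blocks of the claimed sizes and that the ceiling function selects the correct slot in the presence of mixed $b_k\in\{0,1\}$; once this counting argument is in place, the theorem follows by a direct read-off of the labels.
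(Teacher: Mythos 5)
Your overall route coincides with the paper's: $A_h(v)$ consists of the single father vertex of $v$, and its label is decoded from $v$'s own label by stripping a binary suffix and performing an integer division of the clockwise index $l_v(i)$ by the size of a child block. The difference is one of detail, and it cuts both ways. The paper's proof is essentially a one-line assertion (``by the construction method, it is clearly that\ldots''), so your counting argument --- contiguous clockwise blocks of children, block size growing by a factor $2m$ when a bit is $0$ and $m+1$ when it is $1$, the block of father $l_p$ beginning at $(l_p-1)\cdot 2m(m+1)^{\sum_{k=j+1}^{i}b_k}+1$, whence the ceiling recovers $l_p$ --- supplies exactly the justification the paper omits, and it is consistent with the paper's own cardinality formula $l_{\max}(j)=(2m)^{j-\sum b_i}(m+1)^{\sum b_i}$. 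On the other hand, you omit the one substantive item the paper's proof does contain: the identification of the index $j$, which is left undefined in the theorem statement, as the position of the \emph{first $0$ from the right} in $b_1b_2\dots b_i$. You instead take ``the father was added at step $j-1$'' as the definition of $j$, which makes the statement true but not computable from $v$'s label alone --- and the entire point of the theorem (it feeds the routing algorithm of Theorem 5) is that the father's label can be read off from $v$'s label by inspection. Once $j$ is characterized this way, one has $b_{j+1}=\dots=b_i=1$ automatically, so your closing concern about ``mixed $b_k\in\{0,1\}$'' between positions $j+1$ and $i$ is vacuous: the exponent $\sum_{k=j+1}^{i}b_k$ is simply $i-j$, and no extra bookkeeping for interleaved zeros is needed. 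Finally, the paper also records the boundary case you skip: when the rightmost $0$ is $b_1$, the higher-degree neighbor is a hub labeled $n\in\{1,2,3\}$, so the length-zero prefix in the formula must be interpreted as the hub label; your induction base $j=i$ (division by $2m$) is correct, but this hub case should be checked alongside it.
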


\begin{proof}
Knowing that $nb_1b_2b_3 \dots b_{i}l_v(i)$  is the label of an arbitrary vertex  $v$. From the construction mechanism, we obtained that the label of the only father vertex of $v$  is depending on the composition in binary codes $b_1b_2b_3 \dots b_i$  of vertex  $v$. Suppose the first $0$ in  $b_1b_2b_3 \dots b_i$, from right to left side is  $b_j$. By the construction method, it is clearly that $v$  is linked to a vertex with higher degree which is labeled with  $nb_1b_2b_3 \dots b_{j-1}. \lceil {l_v(i) / 2m(m+1)^{\sum_{k=j+1}^{i} b_k}} \rceil$. In particular, if the first $0$ of $b_1b_2b_3 \dots b_i$  is  $b_1$, the vertex with higher degree is exactly a hub of Koch networks which is labeled with  $n=1, 2~or ~3$.  
\end{proof}

\section{Routing by Shortest Path}

The deterministic models of complex network always have fixed shortest path, but how to mark it only by their labels is rarely researched\cite{15}. The following rules are used to determine the shortest path routing between any two vertices by the help of their labels. Let $nb_1b_2b_3 \dots b_i l $  and  $n'b_1'b_2'b_3' \dots b_j' l '$ as the labels of arbitrary pair of vertices in  $K_{m,t}$.

\begin{thm}
The shortest path routing algorithm in Koch networks.

If  $n \neq n'$, find out, by Theorem 4, all their higher degree neighbors of the two vertices, till the hubs  $n$ and  $n'$; then the shortest path is linked all vertices of them;

If  $n = n'$, the first step is marking higher degree neighbors till the common highest degree vertex by Theorem 4; then, judge the two second highest degree vertices are neighbors or not by Theorem 3; if not, the shortest path is connected all higher degree neighbors till the highest degree vertex; if yes, the shortest path is just the same as above but to eliminate the highest degree vertex.

\end{thm}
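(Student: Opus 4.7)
The plan is to use two structural facts established earlier. First, the three sub-networks $K_{m,t}^1$, $K_{m,t}^2$, $K_{m,t}^3$ meet only at the three hubs, so any path crossing from one sub-network to another must pass through the corresponding hubs. Second, iterating the ``higher-degree neighbor'' map $A_h$ of Theorem 4 sends every non-hub vertex to a unique father of strictly larger degree, which gives a rooted spanning tree inside each sub-network whose root is the sub-network's hub. The only non-tree edges inside a sub-network are the sibling edges of $A_e$ from Theorem 2, joining the two members of a single group attached to a common father. The binary-code length of a label is precisely the depth of the vertex in this tree.

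For the case $n \neq n'$, every $v$--$v'$ path must leave $K_{m,t}^n$ and enter $K_{m,t}^{n'}$, and since the sub-networks only touch at the hubs the path is forced to use a hub of each. The number of edges needed inside $K_{m,t}^n$ to reach its hub from $v$ equals the binary-code length of $v$, attained by repeatedly following $A_h$. Lower-boundedness of this count follows from the observation that each edge changes depth by at most one: an $A_h$-edge goes up one level, an $A_e$-edge stays at the same level, and an $A_l$-edge goes down one. The same bound applies inside $K_{m,t}^{n'}$, and a single edge between the two hubs completes the shortest connection.

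For the case $n = n'$, the two $A_h$-chains from $v$ and $v'$ merge at the least common ancestor $w$, which is the ``common highest degree vertex'' in the statement. Concatenating the two chains through $w$ is the baseline candidate. A one-edge improvement is possible exactly when the two vertices $x$ and $y$ lying immediately below $w$ on these chains happen to be joined by an edge, in which case replacing $x$--$w$--$y$ by $x$--$y$ saves a step; by the sibling characterization this adjacency occurs precisely in the parity situation alluded to in the statement. Any other attempted re-routing must detour through a sub-tree rooted at a descendant of $w$ and then rejoin the ascending chain, which strictly increases the edge count.

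The principal difficulty is the lower bound, ruling out subtler re-routings. The cleanest approach is to treat binary-code length as a potential function: since every edge alters it by at most one, the distance from any vertex $v$ to any ancestor $u$ is at least $\mathrm{depth}(v) - \mathrm{depth}(u)$. Combining this in-sub-network inequality with the unique hub-to-hub connection handles $n \neq n'$, while in the case $n = n'$ the potential argument pins the LCA-based path as optimal up to the single sibling shortcut. Verifying that no other shortcut exists reduces to checking that within any single triangle cluster the sibling edge is the only non-tree edge, which is exactly what Theorem 2 supplies.
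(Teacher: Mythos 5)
Your overall architecture---hub separation of the three sub-networks, the $A_h$-map of Theorem 4 as a spanning tree rooted at each hub with the $A_e$ sibling edges of Theorem 2 as the only chords, an LCA argument with the single one-edge sibling shortcut, and a potential-function lower bound---is sound, and it is worth noting that it supplies substantially more than the paper does: the paper's own proof of Theorem 5 is essentially a restatement of the algorithm, asserting that the constructed path ``is the only shortest path'' by appeal to ``the construction method of Koch networks,'' with no lower-bound argument at all. So you are not paralleling the paper's proof; you are attempting the rigorous argument the paper omits.

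However, there is a genuine error in your key step. You identify tree depth with binary-code length and claim every edge changes this quantity by at most one; under the paper's labeling both claims are false. Theorem 4 defines the $A_h$-father by locating the rightmost $0$ in the code (position $j$) and truncating to $b_1b_2\dots b_{j-1}$, so an $A_h$ edge deletes the rightmost $0$ together with all trailing $1$s and can shorten the code by arbitrarily many digits. Concretely, a vertex with code $n01$ (grown at step $2$ from the \emph{father} of a step-$1$ triangle, that father being the hub) has code length $2$ but is adjacent to the hub, i.e., at distance $1$; similarly, reading Definition 3 literally (code length equals joining time), a son attached to $v$ at a much later step gives an $A_l$ edge across which code length jumps by more than one. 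Consequently your stated lower bound ``distance to the hub equals binary-code length'' is simply wrong, and the potential argument as written collapses. The repair is local but necessary: take as potential the number of $A_h$ iterations needed to reach the hub---equivalently, the number of $0$ digits in the code, not its length---which changes by exactly $\pm 1$ along tree edges, by $0$ along sibling edges (identical codes, differing only in $l$), and by $0$ along the hub-hub edges of the initial triangle. With that substitution your hub-cut bound for $n \neq n'$, your LCA analysis for $n = n'$, and your verification that the only cross edge between the two descending branches is the sibling edge between the two children of the LCA (any deeper father-son or sibling edge would force a common ancestor below the LCA) all go through as you outlined.
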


\begin{proof}
If  $n \neq n'$, the two vertices are located in different sub-networks $K_{m,n}^n$  and  $K_{m,n}^{n'}$. The routing by shortest path between two vertices in different subnets is ascertained as below. First, we obtained the neighbors which have higher degrees recursively by Theorem 4, till the hubs $n$  and  $n'$. Then, connect all of them in turn; it’s the only shortest path between two vertices. 

If  $n= n'$, it is clear that the shortest path is located in the same sub-networks  $K_{m,t}^n$. We found out the neighbors with higher degree by using Theorem 4 repeatedly, till the common highest degree vertex. Then, judge the two second highest degree vertices are neighbors or not by Theorem 3. If they are not neighbors, we determined the shortest path as above by linking all the higher degree vertices till the highest vertex, by the help of the construction method of Koch networks. Else if they are neighbors, the shortest path is as the same as above by excluding the highest degree vertex.   

\end{proof}

The shortest path between any pair vertices in $K_{m,t}$  is obtained after no more than  $2t$ times of ceil computations and modulo operations by the help of labeling method and routing algorithm proposed in this research. That is to say, the shortest path routing and the shortest distance between arbitrary pair of vertices in Koch networks can be dealt out in few computations.

\section{Betweenness Centrality}

Betweenness centrality is originated from the analysis of the importance of the individual in social networks, including the betweenness of any vertex and edge in networks. If the betweenness of a node/edge is bigger, then the node/edge is in the social network is more important. \cite{2} The betweenness of a vertex   for undirected networks is given by the expression
\begin{equation}
g(v) = \frac{\sum_{s \neq v \neq t} \sigma_{st}(v)}{(N_t - 1)(N_t -2) / 2},
\end{equation}
where $\sigma_{st}(v)$ is the number of the shortest paths which are passing through  $v$. The computation of betweenness is very difficult in most networks. Fortunately, the betweenness of Koch networks can be derived qualitatively and quantitatively by the help of their labels in Koch networks, which is shown as below.

Suppose that an arbitrary vertex  $v$, which is adding to $K_{m,t}$  at time $i$, is labeled with  $nb_1b_2b_3 \dots b_i l_v(i)$. The vertices in $K_{m,t}$  can be divided into three parts: the vertex  $v$, the offspring vertices which are connected to $v$  directly and indirectly after step  $i$ (they all have lower degrees than $v$), the third part is the other vertices in  $K_{m,t}$.  Assume that the number of the second part vertices is  $N_l$, and it can be worked out that  $N_l = [2(3m+1)^{t-i} -2]/3$ by equations (1). Apparently the number of the third part is  $N_t - N_l - 1$. For the shortest path routing between any two vertices is unique, we got that  $\sum_{s \neq v \neq l} \sigma_{st} (v) = N_t(N_t - N_l - 1)/ 2$. Substitute this equation and equation (1) into equation (8), the betweenness of a vertex which is labeling with $nb_1b_2b_3 \dotsb_i l_v(i)$  is given by

\begin{equation}
g(v) = \frac{2[(3m+1)^{t-1} - 1][3(3m+1)^t - (3m+1)^{t-i} - 1]}{ 3(3m+1)^t [2(3m+1)^t - 1]}.
\end{equation}

For $i = t- ln(k/2) / ln(m+1)$ and $a^{-ln(b)/ln(c)} = b^{-ln(a)/ln(c)}$, then we obtained that the formula $g(v)\sim c_1 k^{ln(3m+1) / ln(m+1)}$  which holds with  $c_1 > 0$. Therefore, the vertex betweenness in Koch networks is in exponentially proportional to the vertex's degree with an exponent $\gamma = ln(3m+1)/ ln(m+1)$  belonging to the interval $(1, 2].$ 

The betweenness of edges can also be deduced by similar way. Note $e$ as the edge between any two neighbor vertices $v$  and $u$  which are labeling with  $nb_1b_2b_3 \dots b_i l_v(i)$  and  $n'b_1'b_2' \dots b_j' l_u'(j)$. Without loss of generality, assume that vertex $u$  has higher degree than $v$. So that the label of $u$  belongs to the set  $A_h(v)$  by Theorem 4. Suppose a triangle are shaped by three vertices:  $v$,  $u$ and  $w$. Therefore,  $w$ has the degree same as  $v$. Then, Koch network $K_{m,t}$  can be divided into three parts: the lower degree vertices linking to  $v$  directly or indirectly, the vertices connected to  $u$ directly or indirectly, the lower degree vertices adding to   directly or indirectly, respectively. Correspondingly, the label set  $L_{m,t}$ will falls into three subsets:  $A_{el}(v)$,  $A_{other}(u)$ and  $A_{el}(w)$. The relationship of these four label sets is shown as below

\begin{equation}
L_{m,t} =  A_{el}(v) \cup A_{other}(u) \cup A_{el} (w). 
\end{equation}

The sizes of  $A_{el}(v)$, $A_{other}(u)$  and  $A_{el}(w)$ are derived as $N_{el}(v) = [2(3m+1)^{t-1} +1] / 3$, $N_{other}(w) = N_t - N_{el}(v) - N_{el}(w)$      and  $N_{el}(w) = N_{el}(v)$. For the shortest path between any two vertices is unique, then the betweenness of the edge  $e$ is defined as below

\begin{equation}
g(e) = \frac{N_{el}N_{other}/ 2}{ (N_t - 1) (N_t - 2) / 2}.
\end{equation}

Therefore, the betweenness centrality of the edge $e$  is given by

\begin{equation}
g(e) = \frac{[2(3m+1)^{t-i} + 1][ 6(3m+1)^t - 4(3m+1)^{t-i} + 1]}{ 18 (3m+1)^t [2(3m+1)^t - 1]}.
\end{equation}

Therefore, the edge betweenness holds  $g(e) \sim c_2 k ^{ln(3m+1) / ln(m+1)}$, where  $c_2 > 0$. The edge betweenness is also in exponentially proportional to the degree of the lower degree vertex  $v$, the exponent is    $\gamma = ln(3m+1) / ln(m+1)$ belonging to the interval $(1, 2]$. In a word, the betweenness of an edge is in exponentially proportional to the time of which is adding to Koch networks.

\section{Resistor networks}

The communities in networks are the groups of vertices within which the connections are dense, but between which the connections are sparser. A community detection algorithm which is based on voltage differences in resistor networks is described in \cite{18} and \cite{19}. The electrical circuit is formed by placing a unit resistor on each edge of the network and then applying a unit potential difference (voltage) between two vertices chosen arbitrarily. If the network is divided strongly into two communities and the vertices in question happen to fall in different communities, then the spectrum of voltages on the rest of the vertices should show a large gap corresponding to the border between the communities.

Moreover, the information in complex networks is not only always flow in the shortest paths; so that the evaluation of betweenness of nodes can also have the other principles, such as the current-flow betweenness. Consider an electrical circuit created by placing a unit resistor on every edge of the network. One unit of current is injected into the network at a source vertex and one unit extracted at a target vertex, so that the current in the network as a whole is conserved. Then, the current-flow betweenness of a vertex  is defined as the amount of current that flows through   in this setup, the average of the current flow over all source-target pairs is shown as below

\begin{equation}
g_v = \frac{\sum_{s < t} I_{st} (v)}{ N_t(N_t -1) / 2},
\end{equation}

where $I_{st}(v)$  is the current over  .

After placed a unit resistor on every edge in  $K_{m,t}$, then insert one unit of current or voltage at source vertex $v_0$  labeling with  $nb_1b_2b_3 \dots b_il$, further choose the target vertex $v_{m+1}$  with labels  $n'b_1'b_2' \dots b_j' l'$. Assume the shortest path is from $v_0$  to vertices  $v_1,..., v_m$  till  $v_{m+1}$. Therefore, the shortest distance is  $m+1$. The property of Koch resister networks is described as below.

\begin{thm}
If  $n \neq n'$, from Theorem 5, there are two hubs $v_j = n$  and  $v_{j+1} = n'$ with highest degree in shortest path. Hence, the vertices which are affected by unit voltage are $\{v_i \} \cup \{ \overline{v_k}\}$, where  $i= 0, 1, 2, \dots, m+1$, $k=0, 1, 2, \dots, j-1, j+1, \dots, m+1$, $\overline{v_i}$  is the neighbor of $v_i$  with the same degree, and apparently $\overline{v_{j+1}}$  is the other hub. The edges between vertices $\{ v_i \} \cup \{ \overline{v_k} \}$ formed $m+1$  triangles which are in series and the common vertices are $\{ v_i \}$, so that the unit current will only passed though these edges in whole Koch networks  $K_{m,t}$. 

If $n = n'$  and there are two highest degree vertices, noting $v_j$  and  $v_{j+1}$, in the shortest path, hence the unit voltage can only affected vertices $\{v_i \} \cup  \{ \overline{v_k} \}$ in  $K_{m,t}$, where  $i = 0, 1, 2, \dots, m+1$,  $k = 0, 1, 2, \dots, j-1, j+1, \dots, m+1$, $\overline{v_i}$  is the neighbor of   with the same degree too, but $\overline{v{j+1}}$  is a higher degree neighbor which is linked with $v_j$  and  $v_{j+1}$  directly; the unit current also flows the edges in  $m+1$ triangles which are in series.

If  $n = n'$, but there is the only highest degree vertices, denoting  $v_j$, in shortest path, the unit voltage impacts vertices $\{v_i \} \cup  \{ \overline{v_k} \}$, where $i = 0, 1, 2, \dots, m+1$,  $k = 0, 1, 2, \dots, j-1, j+1, \dots, m+1$, $\overline{v_i}$  is the neighbor of $v_i$  with the same degree; the behavior of unit current is same as the two conditions above.
\end{thm}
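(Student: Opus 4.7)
The plan is to combine the shortest path routing from Theorem 5 with a Kirchhoff-law argument that exploits the self-similar structure of Koch networks. First, I would fix the shortest path $v_0, v_1, \ldots, v_{m+1}$ produced by Theorem 5. By Theorems 3 and 4, each consecutive pair of vertices on this path is joined by an edge that belongs to a unique triangle in $K_{m,t}$, whose third vertex is the same-degree twin $\overline{v_k}$ (or, in the second case of the theorem, the companion hub $\overline{v_{j+1}}$). Enumerating these triangles along the path then yields exactly $m+1$ triangles that share consecutive vertices $v_i$, which is the claimed series arrangement, and the full vertex set touched by these triangles is precisely $\{v_i\} \cup \{\overline{v_k}\}$.

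The next step would be to establish a structural lemma that drives the current-localization claim: for any vertex $u$ in $K_{m,t}$, the sub-network consisting of all vertices added as descendants of $u$ after $u$ itself entered the network is joined to the remainder of $K_{m,t}$ only through $u$. This follows directly from the iterative construction in Definition 1, since every triangle created at a step $i$ contains exactly one already-present father vertex and two newly inserted son vertices, so the descendants of $u$ reach the rest of the graph only via $u$. By Kirchhoff's current law, if neither the source $v_0$ nor the target $v_{m+1}$ lies in such a sub-network, the net current entering it through $u$ must vanish, and hence no current flows through any edge inside it; equivalently, every vertex in that sub-network sits at the same potential as $u$.

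With this lemma, I would apply it to every vertex $v_i$ on the shortest path, pruning away each side sub-network that contains neither source nor target. What remains carries all of the unit current and, by the triangle identification above, consists precisely of the $m+1$ triangles along the path, with vertex set $\{v_i\}_{i=0}^{m+1} \cup \{\overline{v_k}\}$ as specified. The three-way case split in the statement depends only on whether the path crosses one hub, both hubs, or stays within a single sub-network $K_{m,t}^n$, which is the same trichotomy already handled by Theorem 5, so the description of which vertex plays the role of $\overline{v_{j+1}}$ in each case follows by direct inspection of the routing.

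The main obstacle I anticipate is the clean verification of the middle case: when both $v_j$ and $v_{j+1}$ carry the highest degree on the path, the twin $\overline{v_{j+1}}$ must be the common higher-degree neighbor that closes the junction triangle rather than a same-degree twin, and pinning down this vertex by its label via Theorem 4 requires careful bookkeeping of the binary codes $b_1 b_2 \cdots b_i$ at the junction. A secondary subtlety is confirming that the pruning argument leaves no alternative current-carrying detour; this should follow from uniqueness of the shortest path in Theorem 5 together with the fact that each pruned sub-network is attached at a single cut vertex, so no current can cycle back into the surviving triangles.
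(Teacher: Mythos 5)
You give a correct and essentially complete argument, and it cannot be measured against the paper's route for a simple reason: the paper supplies no proof of this theorem at all --- the statement is asserted bare, with no proof environment, and the two results that follow (Theorems 7 and 8) are then justified by one-line appeals back to it. Your two ingredients are exactly what is needed to make the assertion rigorous: (i) in $K_{m,t}$ every edge lies in a unique triangle (new son vertices attach to a single old vertex, so an existing edge never acquires new common neighbors), which identifies the $m+1$ series triangles along the Theorem 5 path, shows consecutive path edges cannot share a triangle (else the path would shortcut), and pins down the third vertex in each of the three cases, including the junction case where $\overline{v_{j+1}}$ is the common father of the twin pair $v_j, v_{j+1}$; and (ii) the Kirchhoff cut-vertex lemma: a connected region attached to the rest of the network at a single vertex and containing neither source nor target carries no net current, hence no current at all, hence is at the constant potential of its attachment vertex. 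One repair is needed: state (ii) for an arbitrary source/target-free region behind a cut vertex, not only for descendant sub-networks. In the cases $n = n'$ the region that must be pruned at the topmost path vertex $v_j$ contains $v_j$'s father, its same-group twin, the hubs, and the two other sub-networks $K^{n''}_{m,t}$; this region is not a descendant subtree of any core vertex, so your lemma as literally stated does not cover it, although the identical zero-net-injection argument applies since the region still meets the surviving triangles only at $v_j$. With that adjustment the pruning leaves exactly the vertex set $\{v_i\} \cup \{\overline{v_k}\}$ and the $m+1$ triangles in series; note also that uniqueness of the shortest path is not actually needed for the localization step --- the block-cut (cactus) structure of $K_{m,t}$, in which every block is a triangle, already forces all current into the blocks on the unique tree path between $v_0$ and $v_{m+1}$, which is a cleaner closing of the ``no detour'' worry you raise at the end.
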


\begin{thm}

The voltages of vertices $\{v_i \}$ shape an arithmetic progression from $1$ to $0$, and the step length is  $\frac{1}{m+1}$. The voltage of vertices $\{ \overline{v_k} \}$ decrease from $1- \frac{1}{2m+2}$ to  $\frac{1}{2m+2}$, the step length is also $\frac{1}{m+1}$. 

\end{thm}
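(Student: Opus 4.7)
The plan is to reduce the computation to a simple circuit of $m+1$ triangles in series, as guaranteed by Theorem 7, and then use Kirchhoff's laws together with the symmetry of each triangle to read off the voltages at both the path vertices $\{v_i\}$ and the off-path vertices $\{\overline{v_k}\}$.

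First I would exploit Theorem 7: the only part of $K_{m,t}$ carrying nonzero current is the subgraph induced by $\{v_i\} \cup \{\overline{v_k}\}$, which consists of $m+1$ triangles glued consecutively at the path vertices $v_0, v_1, \dots, v_{m+1}$. Since each triangle contains three unit resistors, the effective resistance between $v_i$ and $v_{i+1}$ (a direct unit resistor in parallel with two unit resistors in series through the third vertex) is $\frac{1\cdot 2}{1+2}=\frac{2}{3}$. The $m+1$ triangles are joined in series only at the $v_i$'s, so the total resistance between $v_0$ and $v_{m+1}$ is $\frac{2(m+1)}{3}$, and the total current injected by the unit voltage source is $I = \frac{3}{2(m+1)}$.

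Next I would compute the potential at each $v_i$. The voltage drop across each triangle (i.e.\ between $v_i$ and $v_{i+1}$) equals $I$ times the triangle's equivalent resistance, namely $\frac{3}{2(m+1)}\cdot\frac{2}{3} = \frac{1}{m+1}$. Setting $V(v_0)=1$ and $V(v_{m+1})=0$, this immediately gives $V(v_i) = 1 - \frac{i}{m+1}$, which is the claimed arithmetic progression with common difference $\frac{1}{m+1}$. For the off-path vertex $\overline{v}$ inside the $i$-th triangle, Kirchhoff's current law at $\overline{v}$ reads $(V(v_i)-V(\overline{v}))/1 + (V(v_{i+1})-V(\overline{v}))/1 = 0$ (no current exits through the rest of the network, by Theorem 7), so $V(\overline{v}) = \tfrac12\bigl(V(v_i)+V(v_{i+1})\bigr)$. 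Substituting the arithmetic progression for the $v_i$'s yields $V(\overline{v}) = 1 - \frac{2i+1}{2(m+1)}$, which runs from $1-\frac{1}{2m+2}$ down to $\frac{1}{2m+2}$ in steps of $\frac{1}{m+1}$, as required.

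Finally I would verify that the argument carries over to the three sub-cases distinguished in Theorem 7 (different sub-networks, same sub-network with two highest-degree vertices on the path, and same sub-network with a single highest-degree vertex). In each case the relevant subgraph is still $m+1$ unit-resistor triangles sharing the vertices $\{v_i\}$ in series, so the resistance computation, Kirchhoff balance at each $\overline{v}$, and the symmetry yielding $V(\overline{v}) = \tfrac12(V(v_i)+V(v_{i+1}))$ go through unchanged. The main subtlety, and the only step that requires care, is this last check: one must confirm that when $\overline{v_{j+1}}$ happens to be a hub or a high-degree vertex that is also linked to two consecutive path vertices, no additional current leaks into the bulk of $K_{m,t}$. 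This is exactly the content of Theorem 7, so once it is invoked the symmetry argument applies uniformly, and the voltage formulas follow.
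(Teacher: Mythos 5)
Your proposal is correct, and it is in fact considerably more of a proof than the paper gives: the paper's entire argument for this statement is the single sentence that it is ``obvious'' from the preceding theorem on current confinement. Your series--parallel reduction (each triangle has effective resistance $\tfrac{1\cdot 2}{1+2}=\tfrac{2}{3}$, the $m+1$ triangles in series give total resistance $\tfrac{2(m+1)}{3}$, hence current $\tfrac{3}{2(m+1)}$ and a drop of $\tfrac{1}{m+1}$ per triangle) together with the Kirchhoff balance $V(\overline{v})=\tfrac{1}{2}\bigl(V(v_i)+V(v_{i+1})\bigr)$ at each off-path vertex is exactly the computation the paper leaves implicit, and it yields the stated progressions $V(v_i)=1-\tfrac{i}{m+1}$ and $V(\overline{v})=1-\tfrac{2i+1}{2(m+1)}$ precisely. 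Two small remarks: the confinement result you invoke is the paper's Theorem~6, not Theorem~7 (the statement you are proving is Theorem~7 in the paper's numbering); and your closing check of the three sub-cases --- in particular that when $\overline{v_{j+1}}$ is a hub or the higher-degree common neighbor of $v_j$ and $v_{j+1}$, no current leaks into the rest of $K_{m,t}$ --- is the right point to isolate, since that is the only place where the uniform midpoint formula could conceivably fail, and it is exactly what the confinement theorem guarantees.
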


\begin{proof}
The proof of above is obvious by the help Theorem 6.  
\end{proof}

\begin{thm}

The current stream from the edges which are linked to the vertices $\{v_i \}$ is  $\frac{2}{3}$, while the current pass though the edges linking to $\{ \overline{v_k}\}$ is the remaining  $\frac{1}{3}$.

\end{thm}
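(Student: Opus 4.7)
The plan is to deduce Theorem 8 directly from the voltage distribution given in Theorem 7 via Ohm's law on unit resistors, with essentially no further computation. First I would record the explicit voltages: on the main-path vertices $V(v_i) = 1 - \frac{i}{m+1}$ for $i = 0, 1, \dots, m+1$, and on the side vertices $V(\overline{v_k}) = 1 - \frac{2k+1}{2(m+1)}$, so that each $\overline{v_k}$ sits midway between the successive main-path levels of $v_k$ and $v_{k+1}$.

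Next, because every edge has unit resistance, Ohm's law yields the edge currents at once. A main-path edge $v_i v_{i+1}$ has voltage drop $\tfrac{1}{m+1}$ and carries current $I_{\mathrm{main}} = \tfrac{1}{m+1}$, while each side edge $v_k \overline{v_k}$ or $\overline{v_k} v_{k+1}$ has voltage drop $\tfrac{1}{2(m+1)}$ and therefore carries current $I_{\mathrm{side}} = \tfrac{1}{2(m+1)}$. By Theorem 6 the set $\{v_i\} \cup \{\overline{v_k}\}$ of voltage-affected vertices induces exactly $m+1$ triangles chained in series through the main-path vertices, so any cut separating source from target crosses exactly one main-path edge together with one side branch (the two side edges incident to $\overline{v_k}$ carry the same current by Kirchhoff's law at $\overline{v_k}$, whose only active neighbors are $v_k$ and $v_{k+1}$). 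The total current across any such cut is $I_{\mathrm{main}} + I_{\mathrm{side}} = \tfrac{3}{2(m+1)}$, which after normalization to unit injected current delivers the claimed ratios $\tfrac{I_{\mathrm{main}}}{I_{\mathrm{main}}+I_{\mathrm{side}}} = \tfrac{2}{3}$ and $\tfrac{I_{\mathrm{side}}}{I_{\mathrm{main}}+I_{\mathrm{side}}} = \tfrac{1}{3}$.

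The main obstacle is handling the three geometric cases of Theorem 6 uniformly. In the case $n \neq n'$, one of the ``main-path edges'' is the direct hub-to-hub link $v_j v_{j+1}$, and one must confirm that the same Ohm's-law step length $\tfrac{1}{m+1}$ still applies there; in the case $n = n'$ with a single common highest-degree apex $v_j$, the vertex $\overline{v_{j+1}}$ has strictly higher degree than the adjacent $v_j, v_{j+1}$, and one needs to verify that within the active subgraph it still has exactly those two neighbors. In each scenario, the labeling rules of Theorems 3 and 4 applied along the shortest-path routing of Theorem 5 give the required local check; once it is in place, the voltage-drop calculation is identical across all cases and the $2{:}1$ split follows immediately.
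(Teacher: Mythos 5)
Your proposal is correct, and it supplies a complete argument where the paper offers essentially none: the paper's entire proof of this theorem is the one-line assertion that it ``can be proved easily'' from the forming mechanism together with Theorems 5 and 6. Your route --- read the voltages off Theorem 7, apply Ohm's law on unit resistors to get $I_{\mathrm{main}} = \frac{1}{m+1}$ on each path edge and $I_{\mathrm{side}} = \frac{1}{2(m+1)}$ on each side edge, use Kirchhoff's law at $\overline{v_k}$ (whose only current-carrying neighbors are $v_k$ and $v_{k+1}$) to see that the two side edges of a triangle carry equal current, and normalize the cut total $\frac{3}{2(m+1)}$ to unit injected current --- is exactly the computation the paper's appeal leaves implicit, and your ratios $\frac{2}{3}$ and $\frac{1}{3}$ agree with the equivalent current-divider picture: each triangle is a $1\,\Omega$ edge in parallel with a $2\,\Omega$ two-edge branch, splitting the through-current $2{:}1$. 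One point you correctly flag as an obstacle, and which neither the statement of Theorem 6 nor its ``proof'' actually establishes, deserves to be nailed down explicitly: the confinement of all current to the $m+1$ series triangles (and hence the legitimacy of treating $\overline{v_k}$, and in the hub and apex cases $\overline{v_{j+1}}$, as effectively degree-two in the active subgraph) follows because every triangle in the Koch construction is attached to the existing network at a single father vertex, so each triangle is a block of the graph and everything hanging off the chain of blocks between source and target is pendant; harmonicity of the potential then forces each pendant piece to sit at the voltage of its attachment vertex and carry zero current. With that observation in hand, your uniform treatment of the three cases of Theorem 6 --- the hub-to-hub edge with the third hub as apex when $n \neq n'$, and the common higher-degree father as apex when $n = n'$ --- goes through exactly as you anticipated, since Theorem 7 already asserts the uniform voltage step $\frac{1}{m+1}$ across every triangle in the chain.
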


\begin{proof}
The theorem can be proved easily by the help of the forming mechanism of Koch resistor networks, Theorem 5 and Theorem  6. 
\end{proof}

In brief, the spectrum of voltages on the vertices shows that Koch networks have no significant community structure in spite of having massive triangles between nodes. Also, the current-flow can gauge well the importance of edges betweenness in Koch networks in information flowing which is not flowing only by the shortest path.

\section{Conclusions}

The family of Koch networks, with properties of high clustering coefficient, scale-free, small diameter and average path length and small-world, successfully reproduces some remarkable characteristics in many nature and man-made networks, and has special advantages in the research of some physical mechanisms such as random walk in complex networks. 

We provided an informative vertex labeling method and produced a routing algorithm for Koch networks. The labels include fully information about any vertex’s precise position and the time adding to the networks. By the help of labels, we marked the shortest path routing and the shortest distance between any pair of vertices in Koch networks, the needed computation is just no more than $2t$  times of ceil computations and modulo operations. Moreover, we derived the rigorous solution of betweenness centrality of every vertex and edge in Koch networks, and we also researched the current and voltage characteristics in it on the basis of their labels. 

By the help of our results, in contrast with more usually probabilistic approaches, the deterministic Koch models will have unique virtues in understanding the underlying mechanisms between dynamical processes (random walk, consensus, stabilization, synchronization, and so on) to the structure of complex networks by  the new method of rigorous derivation.

\noindent{\bf Conflict of Interests}\\
The authors declare that there is no conflict of interests regarding the publication of this paper.

\noindent{\bf Acknowledgments}\\
The work of was supported by National Science Foundation of China under Grant Nos. 61273219, 11471016, 11601006
  and 11401004.

%\begin{acknowledgements}
%If you'd like to thank anyone, place your comments here
%and remove the percent signs.
%\end{acknowledgements}

% BibTeX users please use one of
%\bibliographystyle{spbasic}      % basic style, author-year citations
%\bibliographystyle{spmpsci}      % mathematics and physical sciences
%\bibliographystyle{spphys}       % APS-like style for physics
%\bibliography{}   % name your BibTeX data base

% Non-BibTeX users please use
%\begin{thebibliography}{}
%
% and use \bibitem to create references. Consult the Instructions
% for authors for reference list style.
%
%\bibitem{RefJ}
% Format for Journal Reference
%Author, Article title, Journal, Volume, page numbers (year)
% Format for books
%\bibitem{RefB}
%Author, Book title, page numbers. Publisher, place (year)
% etc
%\end{thebibliography}

\end{document}